\title{Existence of Hyperbolic Calorons}
\author{Lesley Sibner\\Department of Mathematics\\Polytechnic Institute of New York University\\Brooklyn, New York 11201, USA\\ \\Robert Sibner\\Department of Mathematics\\ Brooklyn College, City University of New York\\Brooklyn,
New York 11210, USA \\ \\Yisong Yang\footnote{Corresponding author. Email address: yisongyang@nyu.edu}\\Institute of Contemporary Mathematics\\School of Mathematics and Statistics\\Henan University\\Kaifeng, Henan 475000, PR China\\NYU--ECNU
Institute of Mathematical Sciences\\New York University - Shanghai\\3663 North Zhongshan Road, Shanghai 200062, PR China}
\date{}
\newcommand{\bfR}{{\Bbb R}}
\newcommand{\bfH}{{\Bbb H}}
\newtheorem{oldtheorem}{Theorem}[section]
\newtheorem{oldassertion}[oldtheorem]{Assertion}
\newtheorem{oldproposition}[oldtheorem]{Proposition}
\newtheorem{oldlemma}[oldtheorem]{Lemma}
\newtheorem{olddefinition}[oldtheorem]{Definition}
\newtheorem{oldclaim}[oldtheorem]{Claim}
\newtheorem{oldcorollary}[oldtheorem]{Corollary}
\newenvironment{theorem}{\begin{oldtheorem}$\!\!\!${\bf.}}{\end{oldtheorem}}
\newenvironment{lemma}{\begin{oldlemma}$\!\!\!${\bf.}}{\end{oldlemma}}
\newbox\qedbox
\newenvironment{proof}{\smallskip\noindent{\bf Proof.}\hskip \labelsep}%
                        {\hfill\penalty10000\copy\qedbox\par\medskip}
\newcommand{\dd}{\mbox{d}}
\newcommand{\ee}{\end{equation}}
\newcommand{\be}{\begin{equation}}\newcommand{\bea}{\begin{eqnarray}}
\newcommand{\eea}{\end{eqnarray}}
\newcommand{\ii}{\mbox{i}}\newcommand{\e}{\mbox{e}}
\newcommand{\pa}{\partial}\newcommand{\Om}{\Omega}
\newcommand{\vep}{\varepsilon}
\newcommand{\nn}{\nonumber}
\begin{document}
\maketitle
\begin{abstract} Recent work of Harland shows that the $SO(3)$-symmetric, dimensionally-reduced, charge-$N$ self-dual Yang--Mills calorons on the 
hyperbolic space $\bfH^3\times S^1$ may be
obtained through constructing $N$-vortex solutions of an Abelian Higgs model as in the study of Witten on
multiple instantons. In this paper we establish the existence of such minimal action
charge-$N$ calorons by constructing arbitrarily prescribed $N$-vortex solutions of the Witten type equations.
\end{abstract}

\section{Introduction}
\setcounter{equation}{0}

Instantons \cite{Actor} are topological solitons of the zero-temperature Yang--Mills equations in the Euclidean space $\bfR^4$ obtained from the $(3+1)$-dimensional Minkowski spacetime so that the
time axis is made imaginary by a Wick rotation. These classical solutions give leading-order contributions in the partition function calculation and describe tunneling
between various ground states in quantum field theory formalism \cite{Jackiw}. At finite temperature, $T>0$, one needs to compactify the Euclidean time, $t$, which means that
instantons become time-periodic \cite{HS1} so that $t$ is confined within the temporal cell
\be 
0\leq t\leq \beta=\frac1{kT},
\ee
 where $k$ is the Boltzmann constant, so that the normalized partition function assumes an apparently asymmetric form
\be \label{Z}
Z=\int {\cal D}A\exp\left(-\int_0^\beta\int_{\bfR^3}L(A)\,\dd x\dd t\right),
\ee
where $L(A)$ is the action density of the gauge field $A$ and ${\cal D}A$ denotes the path-integral measure. As in the situation of zero-temperature instantons, finite-action $t$-periodic
gauge field solutions can be stratified  \cite{HS1} by homotopy classes defined by maps from $S^2\times S^1$ into $S^3$
 and such finite-temperature instantons have been explicitly constructed by Harrington and Shepard \cite{HS2}, and called calorons, which approach the zero-temperature instantons in
the limit $\beta\to\infty$ so that the asymmetry between the spatial and temporal coordinates, $x$ and $t$, in the partition function (\ref{Z}) disappears.
Motivated by the work on hyperbolic monopoles \cite{Atiyah,Ch,Nash} in the extreme curvature limit \cite{Nor} in connection to the Euclidean monopoles, Harland 
\cite{Harland} carried out a study of hyperbolic calorons and showed that, within Witten's $SO(3)$-symmetric dimensionally reduced ansatz \cite{Witten}, hyperbolic calorons may be
obtained through constructing multiple vortex solutions of an Abelian Higgs Bogomol'nyi system over a cylindrical stripe. Specifically, a unit charge  caloron is presented
and large $\beta$ period and large hyperbolic space curvature limits are discussed \cite{Harland}. However, the existence of a general charge caloron solution remains unsolved.

Technically, the difficulty lies in the fact that the reduced governing equation is defined over an infinite stripe domain in $\bfR^2$ equipped with
an exponentially curved metric. As in the study of Witten \cite{Witten}, the vortex equation may be reduced to a Liouville equation which is known to be integrable. However, the sign of the
nonlinearity only allows local solvability of the equation \cite{Sattinger} and periodicity also introduces additional complexity \cite{Olesen}. The purpose of this paper is to apply the
method of nonlinear functional analysis to establish the existence of arbitrarily prescribed multiple vortex solutions of the Abelian Higgs Bogomol'nyi system derived in the work of Harland
\cite{Harland}. Thus it follows that the existence of an arbitrary-charge hyperbolic caloron \cite{Harland} is obtained.

In the next section, we follow \cite{Harland} to introduce the hyperbolic caloron problem. In particular, we recall the Bogomol'nyi equations of Harland \cite{Harland} defined over a  cylindrical stripe,
similar to Witten's equations for the dimensionally reduced $SO(3)$-symmetric instanton problem \cite{Witten}. In Section 3, we recall the governing elliptic equation in terms of the coordinate
variables.
In Section 4, we prove the existence of solutions by using a variational approach
and a sub- and supersolution argument similar to the method used in constructing the Witten type instanton solutions in $4m$ dimensions \cite{SSY,STY} systematically 
developed by Tchrakian \cite{T1,T2,T3}.
Note that, unlike the problem in \cite{SSY,STY}, the exponential decay of the curved metric and
finite periodicity make it hard to gain precise information of a solution at infinity. Although we know that the solution remains bounded, we do not know whether it
has a definite asymptotic value at infinity. In particular, we do not know its uniqueness. In Section 5, we deduce some suitable decay estimates 
near the boundaries of the stripe domain for the solution obtained
which allow us to compute the associated topological charge realized as the
total magnetic flux, or the second Chern class, explicitly. In Section 6, we obtain decay estimates for the gradient of the solution obtained.
These estimates make the formal Bogomol'nyi reduction legitimate and lead us to the conclusion 
in Section 7 that the action of a charge $N$ caloron 
represented by the gauge field $A$ carries the anticipated
minimum action, $S(A)=2\pi^2 N$, in normalized units. 
 
\section{Hyperbolic calorons and vortices}
\setcounter{equation}{0}

Following Harland \cite{Harland}, the radial coordinate $R$ of a point $(x^1,x^2,x^3)$ in the hyperbolic ball $\bfH^3$ is given by
$R^2=(x^1)^2+(x^2)^2+(x^3)^2$ which is confined in the interval $0\leq R<S$ with $S>0$ the scalar curvature of $\bfH^3$. The temporal coordinate $x^0=t$ is 
of period $\beta>0$ and parametrizes $S^1$.  Using $\dd\Om^2$ to denote the metric on the standard 2-sphere and introducing the new variable 
\be 
r=\frac S2 \tanh^{-1}\left(\frac RS\right),\quad 0\leq r<\infty,
\ee
the metric for $\bfH^3\times S^1$ is given by
\be \label{metric}
\dd s^2 =\dd t^2+\dd r^2 +\Xi^2\dd\Om^2,
\ee
where the conformal factor $\Xi>0$ is defined by the function
\be\label{4}
{\Xi}={\Xi}(r)=\frac S2\sinh \bigg(\frac{2r}S\bigg),\quad r>0.
\ee

Let $A$ be an $su(2)$-valued connection 1-form or the gauge field, $A=A_\mu\dd x^\mu$, with the associated curvature 2-form $F_A=\dd A+[A,A]$, and $*$ the Hodge
star operator induced from the metric (\ref{metric}) over the manifold $M=\bfH^3\times S^1$. The Yang--Mills action, $S(A)$, of $A$, is given in the standard form
\be\label{action}
S(A)=-\int_M \mbox{Tr} (F_A\wedge *F_A),
\ee
accompanied with the topological charge 
\be\label{topo}
Q(A)=c_2(A)=-\frac1{8\pi^2}\int_M \mbox{Tr}(F_A\wedge F_A),
\ee
expressed formally as the second Chern class or the first Pontryagin class. Finite-action condition implies \cite{Harland} that the field decays appropriately at the boundary of $M$ so that one can recognize the topological
lower bound
\be \label{lower}
S(A)\geq 2\pi^2 |Q(A)|.
\ee
A caloron has a positive charge, $Q(A)>0$,  saturates the lower bound (\ref{lower}), $S(A)=2\pi^2 Q(A)$, and satisfies the
self-dual equation
\be\label{YM}
F_A=*F_A. 
\ee

As in \cite{Harland}, we are interested in the vanishing holonomy situation where
$Q(A)$ is an integer. To proceed further, we follow \cite{Harland,Landweber} to represent the $SO(3)$-symmetric gauge field $A$ in terms of an Abelian gauge field $a=a_t\dd t+a_r\dd r$ and a complex
scalar Higgs field $\phi=\phi_1-\ii\phi$ as
\be \label{AS}
A=-\frac12(qa+\phi_1\dd q+[\phi_2+1]q\dd q),
\ee
where $q=x^j\sigma^j/R$ with $\sigma^j$ ($j=1,2,3$) denoting the Pauli spin matrices. Thus, in terms of the reduced Abelian curvature $F_a=\dd a$ and connection $\dd_a\phi=\dd\phi+\ii a\phi$, the 
Yang--Mills action over $M=\bfH^3\times S^1$ boils down into an Abelian Higgs action over the cylindrical stripe 
\be \label{strip}
{\cal M}=\{(r,t)\,|\,0<r<\infty,\,0\leq t\leq\beta\}, 
\ee
equipped with the metric
\be \label{dell}
\dd\ell^2=\frac1{\Xi^2}(\dd t^2+\dd r^2),
\ee
of the form
\be \label{action-phi-a}
S(A)=S(\phi,a)=\frac\pi2\int_{\cal M}\left(F_a\wedge * F_a+2\dd_a\phi\wedge *\overline{\dd_a\phi}+*(1-|\phi|^2)^2\right),
\ee
where now $*$ is understood to be the Hodge dual with respect to the metric (\ref{dell}) on $\cal M$. Furthermore, the topological charge $Q(A)$ becomes the first Chern number
\be 
Q(A)=\frac1{2\pi}\int_{\cal M} F_a=c_1(a).
\ee
Accordingly, using the method of Bogomol'nyi \cite{Bo,JT}, it can formally be shown that the lower bound stated in (\ref{lower}) is attained if the pair $(\phi,a)$ satisfies
the self-dual vortex equations over $\cal M$,
\bea 
\dd_a\phi+\ii *\dd_a\phi&=&0,\label{v1}\\
\mbox{$*$} F_a&=&1-|\phi|^2.\label{v2}
\eea
These equations can also be reduced from the original Yang--Mills equation (\ref{YM}) via the $SO(3)$-symmetric ansatz (\ref{AS}) as described in \cite{Harland}.
In view of such a connection, our main existence theorem for calorons may be stated as follows.

\begin{theorem} For any integer $N\geq1$, the self-dual Yang--Mills equation (\ref{YM}) over the hyperbolic space $\bfH^3\times S^1$ has a
$2N$-parameter family of smooth solutions, say $\{A\}$, realizing the prescribed topological invariant $Q(A)=c_2(A)=N$ so that the action (\ref{action})
saturates the topological lower bound stated in (\ref{lower}), $S(A)=2\pi^2 Q(A)=2\pi^2 N$. In fact, such solutions may be obtained by constructing multivortex solutions
of (\ref{v1}) and (\ref{v2}) representing $N$ vortices realized as zeros of the complex Higgs field $\phi$ over a cylindrical 2-surface $\cal M$ defined in (\ref{strip})
and equipped with the metric (\ref{dell}).
\end{theorem}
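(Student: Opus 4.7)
The plan is to reduce the existence question for a caloron of charge $N$ on $\bfH^3\times S^1$ to solving the Abelian Higgs vortex equations (\ref{v1})--(\ref{v2}) on the stripe $\mathcal{M}$ for any prescribed configuration of $N$ zeros of the Higgs field $\phi$; once such vortex solutions are constructed and suitable decay estimates at the two ends of $\mathcal{M}$ are established, the ansatz (\ref{AS}) lifts them to smooth caloron fields saturating the topological lower bound (\ref{lower}). The zero-set of $\phi$, counted with multiplicities, furnishes the advertised $2N$ real parameters of the solution family.

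First I would fix points $p_1,\ldots,p_N \in \mathcal{M}$ (repeated with multiplicities) and make the standard substitution $u=\log|\phi|^2$, writing $\phi = e^{u/2+\ii\Theta}$ with $\Theta$ a multi-valued phase carrying the winding around each $p_j$. Inserting this into (\ref{v1}) determines the gauge field $a$ in terms of $u$ and $\Theta$ up to gauge transformations, and (\ref{v2}), expressed in the Euclidean coordinates $(t,r)$ on $\mathcal{M}$, reduces to the scalar elliptic equation
\[
\Delta u \;=\; \frac{1}{\Xi(r)^2}\bigl(e^u - 1\bigr) + 4\pi \sum_{j=1}^N \delta_{p_j},
\]
for a real function $u$ on $\mathcal{M}$, periodic in $t$ of period $\beta$, with $\Delta=\pa_t^2+\pa_r^2$ the flat Laplacian. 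Absorbing the Dirac sources into a smooth background $u_0$ supported near the $p_j$'s, the problem reduces to a regular semilinear equation for $v=u-u_0$ on $\mathcal{M}$.

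Next I would construct $v$ via the sub- and super-solution scheme flagged in Section 4. A natural supersolution is obtained by choosing $\overline v$ so that $u_0+\overline v\le 0$, making $e^{u_0+\overline v}-1\le 0$; a matching subsolution $\underline v$ is built near the $p_j$'s from local barriers together with a large negative constant in the bulk, exploiting the fact that $\Xi^{-2}$ is smooth and positive throughout the interior of $\mathcal{M}$. Monotone iteration on an exhausting family of bounded subdomains $\mathcal{M}_R=\{1/R\le r\le R,\ 0\le t\le\beta\}$ (periodic in $t$, Dirichlet data matching $\underline v$ on the truncation boundaries $r=1/R$ and $r=R$) produces solutions $v_R$ sandwiched between $\underline v$ and $\overline v$; interior $C^{2,\alpha}$ estimates then deliver a classical solution $v$ on all of $\mathcal{M}$ after passing $R\to\infty$. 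An alternative variational construction, minimizing the convex functional associated with the equation as in the higher-dimensional analysis of \cite{SSY,STY}, gives the same conclusion.

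The principal obstacle, and the reason this requires more than a direct transcription of the $\bfR^4$ instanton argument, is the degenerate behavior of the coefficient $\Xi^{-2}$ at the two ends of $\mathcal{M}$: $\Xi\to0$ as $r\to 0^+$ and $\Xi\to\infty$ as $r\to\infty$, so the nonlinearity blows up at one end and vanishes at the other, destroying any global coercivity argument and preventing the identification of a definite asymptotic value of $u$; in particular uniqueness of $v$ should not be expected, as the introduction already warns. To nevertheless extract the correct topological content, I would follow Sections 5 and 6 of the paper to prove quantitative decay of $1-|\phi|^2$, $|\dd_a\phi|$ and $|F_a|$ at both ends of $\mathcal{M}$. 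These decay estimates make the formal Bogomol'nyi completion of squares in (\ref{action-phi-a}) legitimate with no boundary contribution, forcing $S(\phi,a)=2\pi\int_{\mathcal{M}}F_a=2\pi^2 N$; pulling back through the $SO(3)$-symmetric ansatz (\ref{AS}) then yields the desired smooth caloron $A$ with $Q(A)=N$ and $S(A)=2\pi^2 N$.
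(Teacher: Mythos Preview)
Your high-level strategy matches the paper's: reduce to the scalar equation for $u=\ln|\phi|^2$, split $u=u_0+v$, build $v$ by sub/super solutions, then establish boundary decay to justify the Bogomol'nyi computation. The supersolution $\overline v=-u_0$ is exactly what the paper uses. The substantive gap is your subsolution. A ``large negative constant in the bulk'' does not give a subsolution of $\Delta v=\tfrac{2}{\Xi^2}(e^{u_0+v}-1)+g$: on the support of $g$ one needs $0\ge -\tfrac{2}{\Xi^2}+g$, and since $g$ is determined by the cutoff of $u_0$ it can exceed $\tfrac{2}{\Xi^2}$ there; local barriers near the $p_j$ do not help because the obstruction sits in the annular region where $\eta$ transitions, not at the vortex cores. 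The paper handles this by a genuinely different construction: it replaces $g(r,t)$ by its $t$-maximum $G(r)\ge 0$ and solves the \emph{one-dimensional} problem $w_{rr}=\tfrac{2}{\Xi^2}(e^w-1)+G(r)$ with $w(0)=0$, obtained as the limit of variational minimizers on intervals $(\varepsilon_n,K_n)$; the key uniform estimate comes from the Hardy-type inequality $\int r^{-2}f^2\le 4\int (f')^2$, which controls the singular coefficient $\Xi^{-2}\sim r^{-2}$ at $r=0$ and yields coercivity independent of $n$. The resulting $w(r)\le 0$ is then the subsolution $v^-$. This ODE step, not the iteration on $\mathcal M_R$, is where the analytic work lies.

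A second point to correct: you assert decay of $1-|\phi|^2$ at both ends, but the paper explicitly does \emph{not} obtain $|\phi|\to 1$ as $r\to\infty$; it only shows $v$ stays bounded there (and even remarks it cannot prove $w_\infty=0$). What makes the flux and action computations go through is instead (i) the exponential decay of $\Xi^{-2}$, which forces $F_{tr}=\Xi^{-2}(1-|\phi|^2)$ and the potential term to be integrable regardless, and (ii) separate gradient estimates showing $|\nabla v|\to 0$ at both ends, so that the divergence term $\partial_t(\phi\overline{D_r\phi})-\partial_r(\phi\overline{D_t\phi})$ integrates to zero. Your outline should replace ``decay of $1-|\phi|^2$'' by these two ingredients. (Minor: the coefficient in the scalar equation is $2/\Xi^2$, not $1/\Xi^2$.)
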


In the subsequent sections, we aim at solving the coupled equations (\ref{v1}) and (\ref{v2}), which belong to a category of gauge field equations over Riemann surfaces
known as Hitchin's equations \cite{Hitchin}.

\section{Elliptic governing equation}
\setcounter{equation}{0}

It will be convenient to rewrite (\ref{v1}) and (\ref{v2}) in terms of the $(r,t)$-coordinates as in \cite{Harland}. Thus these equations become
\bea 
D_r \phi +\ii D_t \phi&=&0,\label{1}\\
{\Xi}^2 (\pa_t a_r-\pa_r a_t)&=& 1-|\phi|^2,\label{2}
\eea
where
$
D_r\phi=\pa_r\phi+\ii a_r\phi$ and $ D_t\phi=
\pa_t\phi+\ii a_t\phi
$
 are the gauge-covariant derivatives of $\phi$ and the field configurations are all $\beta$-periodic in the variable $t$.

Use complex variables to represent the equations with the convention
\be\label{5}
z=r+\ii t,\quad a=a_r+\ii a_t,\quad \pa_z=\pa=\frac12(\pa_r-\ii\pa_t),\quad \pa_{\overline{z}}=\overline{\pa}=\frac12(\pa_r+\ii\pa_t).
\ee
Then (\ref{1}) takes the form
\be\label{6}
\overline{\pa}\phi=-\frac\ii2 a\phi,
\ee
so that, away from the zeros of $\phi$, we have $a=2\ii\overline{\pa}\ln\phi$. On the other hand, noting that
\be\label{7}
\pa a -\overline{\pa}\,\overline{a}=\ii (\pa_r a_t-\pa_t a_r),
\ee
we see that the nontrivial $rt$-component of the curvature of $a$ may be represented as
\be\label{8}
F_{rt}=(F_a)_{rt}=\pa_r a_t-\pa_t a_r=2\pa\overline{\pa}\ln|\phi|^2=\frac12(\pa_r^2+\pa_t^2)\ln|\phi|^2=\frac12\Delta\ln|\phi|^2.
\ee

In view of (\ref{8}), we can rewrite (\ref{2}) away from the zeros of $\phi$ as
\be\label{9}
\frac12 {\Xi}^2\Delta \ln |\phi|^2 =|\phi|^2 -1.
\ee

It is well known that the zeros of $\phi$ are discrete and have integer multiplicities. Let these zeros be
\be\label{10}
p_1,p_2,\cdots,p_N,
\ee
where and in the sequel, a zero of multiplicity $m$ appears in the list (\ref{10}) $m$ times. Set $u=\ln|\phi|^2$. Then, over the full space $\cal M$, (\ref{9}) becomes
\be\label{11}
\Delta u=\frac2{{\Xi}^2}(\e^u-1)+4\pi\sum_{j=1}^N\delta_{p_j}.
\ee
We are to look for a solution of (\ref{11}) satisfying the boundary condition
\be\label{12}
u(r,t) \to 0\mbox{ as } r\to0;\quad u(r,t)\mbox{ is of period $\beta$ in the variable $t$};
\ee
$u$ stays bounded over $\cal M$.

Conversely, the Higgs field $\phi$ has the amplitude $|\phi|^2=\e^u$ in terms of a solution $u$ of (\ref{11}) and the Abelian gauge field $a_t, a_r$ may be constructed
from utilizing (\ref{1}) to give us
\be 
a_t=\mbox{Re}\{2\ii\overline{\pa} \ln u\},\quad a_r=\mbox{Im}\{2\ii\overline{\pa}\ln u\},
\ee
which allows us to find the useful relation
\be\label{DD}
|D_t\phi|^2+|D_r\phi|^2=\frac12\e^u|\nabla u|^2.
\ee

Returning to the equation (\ref{11}), staying away from the points $p_1,p_2,\cdots, p_N$, and using the translation
\be 
u=2\ln \Xi +v,
\ee
we see that the function satisfies the Liouville equation
\be \label{Liouville}
\Delta v=2\e^v,
\ee
which is integrable. However, such an integrability is only local because (\ref{Liouville}) is known \cite{Sattinger}
to have no entire solution
over $\bfR^2$, although our problem requires that the solution be of period $\beta$ in its $t$ variable. In doubly periodic
case, the solutions to the Liouville equation are considered by Olesen \cite{Olesen,Olesen2} in the context of nonrelativistic Chern--Simons vortices and electroweak vortices over periodic lattices where one needs to use
the elliptic functions \cite{Ak,Chan,Lang} of Weierstrass since the holomorphic functions representing solutions of (\ref{Liouville})
are periodic. In our situation here, complication comes from both the periodicity of the solution 
$v$ of (\ref{Liouville}) in the $t$ variable
and unboundedness of $v$ in the $r$ variable as $r\to0$ and $r\to\infty$, respectively, as a consequence of the form of the background function
$\Xi$ given in (\ref{4}). Due to these difficulties, we choose to use analytic methods to study (\ref{11}) directly,
rather than treating it as an integrable equation, so that the desired boundary conditions and arbitrarily prescribed
distribution of vortices, as well as their relations to the calculation of
topological charges and minimal actions, can all be realized directly and readily.
 
\section{Construction of solution to the vortex equation}
\setcounter{equation}{0}

For $x=(r,t)$, set
\be 
U_0(x)=-\sum_{j=1}^N \ln(1+|x-p_j|^{-2}),\quad x\in\bfR^2.
\ee
Then $U_0<0$. Let $\eta(x)$ be a smooth cut-off function such that $0\leq\eta\leq1$, $\eta$ is of compact support in the rectangle $(0,\infty)\times(0,\beta)$, and
\be 
\eta=1\quad\mbox{in a neighborhood of }\{p_1,p_2,\cdots, p_N\}.
\ee
Then $u_0=\eta U_0$ is of compact support in $(0,\infty)\times(0,\beta)$, $u_0\leq0$,  and
\be\label{2.1}
\Delta u_0=4\pi\sum_{j=1}^N\delta_{p_j}-g(r,t)
\ee
with $g\in C_0^\infty((0,\infty)\times(0,\beta))$ (set of smooth functions of compact supports).

Rewrite $u$ in (\ref{11}) as $u=u_0+v$. Then we have
\be\label{2.2}
\Delta v=\frac2{{\Xi}^2}(\e^{u_0+v}-1)+g(r,t)
\ee
which is so defined that we are interested in solution depending on the radial variable $r>0$ and the Euclidean time variable $t$ of period $\beta$.

\begin{lemma}\label{lemma0} The function $v^+=-u_0$ is an upper solution of (\ref{2.2}).
\end{lemma}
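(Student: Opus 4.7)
My plan is to verify the supersolution inequality directly by plugging $v^+=-u_0$ into both sides of (\ref{2.2}). Recall that an upper (super-) solution of the semilinear equation $\Delta v = F(r,t,v)$, where $F(r,t,v) = \frac{2}{\Xi^2}(e^{u_0+v}-1) + g$ is monotonically increasing in $v$, is a function satisfying $\Delta v^+ \leq F(r,t,v^+)$ in the distributional sense, together with the prescribed boundary/periodicity behavior.

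First, I would compute the right-hand side at $v^+$. Since $v^+=-u_0$, we have $u_0+v^+\equiv 0$, hence $e^{u_0+v^+}-1=0$, and the RHS collapses simply to $g(r,t)$. Next, I would compute the left-hand side using (\ref{2.1}): since $\Delta u_0 = 4\pi\sum_{j=1}^N\delta_{p_j} - g$, we get
\[
\Delta v^+ \;=\; -\Delta u_0 \;=\; -4\pi\sum_{j=1}^N\delta_{p_j} + g(r,t).
\]
Comparing the two expressions yields
\[
\Delta v^+ - F(r,t,v^+) \;=\; -4\pi\sum_{j=1}^N\delta_{p_j} \;\leq\; 0
\]
as a distribution, since each Dirac mass is nonnegative. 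This is exactly the supersolution inequality.

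Finally, a boundary/regularity sanity check: because $\eta$ is of compact support in $(0,\infty)\times(0,\beta)$, so is $u_0$, and therefore $v^+=-u_0$ is smooth away from the $p_j$'s, vanishes near $r=0$ and near $r=\infty$, is automatically $\beta$-periodic in $t$, and is globally bounded with $v^+\geq 0$ (since $u_0\leq 0$). Hence $v^+$ complies with the boundary condition (\ref{12}) inherited by $v$ from $u=u_0+v$.

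There is no real obstacle here: the choice $v^+=-u_0$ is designed precisely to annihilate the exponential nonlinearity and to absorb the $g$-term, leaving only the nonnegative delta contributions on the left. The only subtlety worth flagging is that the inequality must be interpreted in the distributional sense at the vortex points $p_j$, where $v^+$ has logarithmic singularities coming from $U_0$; away from these points the verification is a pointwise identity.
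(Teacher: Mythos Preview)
Your proof is correct and follows essentially the same route as the paper's own proof: both compute $\Delta v^+=-\Delta u_0=-4\pi\sum_j\delta_{p_j}+g$ from (\ref{2.1}), observe that $u_0+v^+\equiv0$ kills the exponential term so the right-hand side reduces to $g$, and conclude the distributional inequality. Your version is simply more explicit about the boundary/periodicity checks and the distributional interpretation at the vortex points.
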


\begin{proof} The function $v^+$ is clearly of period $\beta$ in the variable $t$. Besides, in sense of distribution, we have from (\ref{2.1}) the inequality
\be 
\Delta v^+=\Delta (-u_0)\leq\frac2{{\Xi}^2} (\e^{u_0+v^+}-1)+g(r,t)
\ee
as desired.
\end{proof}

We next construct a lower solution of (\ref{2.2}). For this purpose, we define
\be\label{def}
G(r)=\max\{\max\{g(r,t),0\}\,|\, 0\leq t\leq\beta\}.
\ee

We consider the boundary value problem
\bea
w_{rr}&=&\frac2{{\Xi}^2} (\e^w-1)+G(r),\quad r>0,\label{ODE1}\\
w(0)&=&0,\quad w(\infty)=w_\infty,\label{BC1}
\eea
where $w_\infty\leq0$ is an undetermined constant.

The singular nature of the equation (\ref{ODE1}) does not allow us to approach it directly. Instead, we consider the approximate boundary value problem
\bea
w_{rr}&=&\frac2{{\Xi}^2} (\e^w-1)+G(r),\quad \vep_n<r<K_n,\label{ODE2}\\
w(\vep_n)&=&0,\quad w(K_n)=0,\label{BC2}
\eea
for $n=1,2,\cdots$. Here $\{\vep_n\}$ and $\{K_n\}$ are monotone sequences of positive numbers with $\vep_n<K_n$, $\mbox{supp}(G)\subset (\vep_n, K_n)$, $n=1,2,\cdots$, and
\be 
\lim_{n\to\infty}\vep_n=0,\quad\lim_{n\to\infty}K_n=\infty.
\ee

\begin{lemma}\label{lemma2.1} The boundary value problem consisting of (\ref{ODE2}) and (\ref{BC2})  has a unique solution which may be
obtained by minimizing the functional
\be\label{2.4}
I_n(w)=\int_{\vep_n}^{K_n} \bigg\{\frac12(w_r)^2+\frac2{{\Xi}^2}({\rm{\e}}^{w}-1-w)+G(r) w\bigg\}\,{\rm\dd} r
\ee
in the space $W^{1,2}_0(\vep_n,K_n)$.
\end{lemma}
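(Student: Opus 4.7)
The plan is to apply the direct method of the calculus of variations on the Hilbert space $W^{1,2}_0(\vep_n,K_n)$. The key structural observation is that $I_n$ is \emph{strictly convex}: the Dirichlet term $\tfrac12(w_r)^2$ is convex in $w_r$; the nonlinear density $\e^w-1-w$ has second derivative $\e^w>0$ and is therefore strictly convex in $w$; and the term $G(r)w$ is linear and so affects neither convexity nor weak lower semicontinuity. Strict convexity guarantees that any critical point of $I_n$ is the unique global minimizer, so the uniqueness assertion in the lemma is automatic once existence is established.

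For existence I would verify coercivity and weak lower semicontinuity on $W^{1,2}_0(\vep_n,K_n)$. Using the elementary inequality $\e^w-1-w\geq 0$ for all $w\in\bfR$, one obtains
$$
I_n(w)\geq \frac12\int_{\vep_n}^{K_n}(w_r)^2\,\dd r -\|G\|_\infty\int_{\vep_n}^{K_n}|w|\,\dd r.
$$
Since $w$ vanishes at both endpoints, the Poincaré inequality on the bounded interval $[\vep_n,K_n]$ bounds $\|w\|_{L^2}$, and hence the linear term, in terms of $\|w_r\|_{L^2}$. This forces $I_n(w)\to+\infty$ as $\|w\|_{W^{1,2}_0}\to\infty$. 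On a one-dimensional bounded interval the embedding $W^{1,2}_0(\vep_n,K_n)\hookrightarrow C([\vep_n,K_n])$ is compact, so any minimizing sequence $\{w_k\}$ admits a subsequence converging uniformly to some $w_n\in W^{1,2}_0$; dominated convergence handles the exponential nonlinearity, while the Dirichlet term is weakly lower semicontinuous. One concludes that $w_n$ minimizes $I_n$.

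The Euler--Lagrange equation of $I_n$ is exactly (\ref{ODE2}), and since $\Xi$ and $G$ are smooth on the compact interval $[\vep_n,K_n]$, standard ODE bootstrapping yields $w_n\in C^\infty([\vep_n,K_n])$; the boundary conditions (\ref{BC2}) hold classically because elements of $W^{1,2}_0$ in one dimension are continuous up to the endpoints. Strict convexity of $I_n$ then seals uniqueness of the minimizer, and any classical solution of (\ref{ODE2})--(\ref{BC2}) is necessarily a critical point of $I_n$ and hence coincides with $w_n$.

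I do not anticipate a serious obstacle within this lemma. The quadratic-plus-convex structure of $I_n$, together with compactness of the one-dimensional Sobolev embedding, makes the direct method essentially automatic, and the linear perturbation $G(r)w$ is absorbed by the quadratic part via Poincaré. The real analytic difficulties enter the subsequent lemmas, where one must pass to the limit $n\to\infty$ with $\vep_n\to 0$ and $K_n\to\infty$ to reach a solution of the singular problem (\ref{ODE1})--(\ref{BC1}); within the present statement the only ingredient that requires a little care is that the exponential nonlinearity causes no loss of weak lower semicontinuity, which is handled by the compact embedding on the fixed bounded interval.
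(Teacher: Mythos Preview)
Your argument is correct and, for existence, essentially mirrors the paper's own proof: both discard the nonnegative term $\e^w-1-w$, control the linear $G$-term via the Poincar\'e inequality on $(\vep_n,K_n)$ (the paper also invokes Schwarz), obtain coercivity of $I_n$, and conclude by the direct method. The one genuine point of departure is uniqueness: the paper argues by the maximum principle, subtracting two putative solutions of (\ref{ODE2})--(\ref{BC2}) and comparing, whereas you invoke strict convexity of $I_n$. Your route is at least as clean here, since strict convexity is immediate from the second derivative $\e^w>0$ of the nonlinear density and simultaneously identifies every classical solution with the unique minimizer. The paper's choice is not gratuitous, though: the same maximum-principle comparison is what drives the monotone ordering $w_{n+1}<w_n$ in Lemma~\ref{lemma2.2}, so it is being set up for reuse.
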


\begin{proof} Since ${\Xi}(r)\geq {\Xi}(\vep_n)>0$ for $r\in(\vep_n,K_n)$ and $\e^w-1-w\geq 0$, we can use the Schwarz inequality and the Poincar\'{e} inequality to derive easily the coerciveness
of the functional $I_n$, namely, $I_n(w)\geq C_1 \|w\|_{W^{1,2}_0(\vep_n,K_n)}-C_2$ for some constants $C_1, C_2>0$. Hence the existence of critical point of $I_n$ in $W^{1,2}_0(\vep_n,K_n)$
follows which solves the boundary value problem (\ref{ODE2})--(\ref{BC2}). The uniqueness can be proved using a maximum principle argument in (\ref{ODE2})--(\ref{BC2}).
\end{proof}

In order to pass to the $n\to\infty$ limit, we need the following monotonicity results. For notational convenience, we will use $f_r$ and $f'$ interchangeably to denote the
derivative of a function $f$ with respect to the radial variable $r$.

\begin{lemma}\label{lemma2.2} Let $w_n$ be the unique solution of the boundary value problem (\ref{ODE2})--(\ref{BC2})
obtained in Lemma \ref{lemma2.1}. Then there hold the monotonicity relation
\be\label{2.6}
I_n(w_n)\geq I_{n+1}(w_{n+1}),\quad n\geq 1,
\ee
and the uniform coerciveness lower bound
\be\label{2.7}
I_n(w_n)\geq C_1\int_{\vep_n}^{K_n}(w_n'(r))^2\,{\rm\dd} x-C_2,\quad n\geq1,
\ee
where $C_1, C_2>0$ are independent of $n$. Furthermore, the sequence $\{w_n\}$ is monotone-ordered according to
\be\label{2.8}
0>w_n>w_{n+1},\quad \vep_n<r<K_n,\quad n\geq1.
\ee
\end{lemma}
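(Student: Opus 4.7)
\smallskip\noindent\textbf{Proof proposal.} My plan is to establish the three claims in sequence: the variational monotonicity (\ref{2.6}) first, then the uniform coerciveness (\ref{2.7}), and finally the pointwise ordering (\ref{2.8}). For (\ref{2.6}) the cleanest route is a test-function extension argument: since $\vep_{n+1}<\vep_n$, $K_{n+1}>K_n$, and $w_n(\vep_n)=0=w_n(K_n)$, extending $w_n$ by zero on $(\vep_{n+1},\vep_n)\cup(K_n,K_{n+1})$ produces an element $\tilde w_n\in W^{1,2}_0(\vep_{n+1},K_{n+1})$. Because $\mbox{supp}(G)\subset(\vep_n,K_n)$ and because the integrand of $I_{n+1}$ vanishes identically at $w\equiv0$, the added sub-intervals contribute nothing, so $I_{n+1}(\tilde w_n)=I_n(w_n)$; the minimality of $w_{n+1}$ guaranteed by Lemma \ref{lemma2.1} then gives (\ref{2.6}).

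For (\ref{2.7}) I would exploit the fact that the two nonlinear terms $\tfrac12 (w_r)^2$ and $\tfrac{2}{\Xi^2}(\e^w-1-w)\geq0$ are already nonnegative, so the whole burden is to absorb $\int G(r)w_n\,\dd r$ into the Dirichlet term with constants independent of $n$. The decisive observation is that $\mbox{supp}(G)$ is contained in some fixed compact subinterval $[a,b]\subset(0,\infty)$. Since $w_n(\vep_n)=0$, the fundamental theorem of calculus and Cauchy--Schwarz give $|w_n(r)|\leq\sqrt{b}\,\|w_n'\|_{L^2(\vep_n,K_n)}$ uniformly in $n$ and for every $r\in[a,b]$, so a Young inequality bounds $|\int G w_n\,\dd r|$ by $\tfrac14\|w_n'\|^2_{L^2}+C$ with $C$ depending only on $\|G\|_{L^1}$ and $b$, yielding (\ref{2.7}).

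For (\ref{2.8}) I would turn to maximum-principle arguments on the linearized form of (\ref{ODE2}). Writing $\e^{w}-1 = c(r)w$ with $c(r)=\int_0^1\e^{sw_n}\,\dd s>0$ converts (\ref{ODE2}) into $w_n''-\tfrac{2c(r)}{\Xi^2}w_n=G(r)\geq0$, and the weak maximum principle combined with the boundary data $w_n(\vep_n)=0=w_n(K_n)$ forces $w_n\leq0$; the strong maximum principle upgrades this to $w_n<0$ on $(\vep_n,K_n)$ (in the nontrivial case $G\not\equiv 0$). For the ordering, $w_n-w_{n+1}$ satisfies
\be
(w_n-w_{n+1})''=\frac{2}{\Xi^2}\bigl(\e^{w_n}-\e^{w_{n+1}}\bigr)\quad\mbox{on }(\vep_n,K_n),
\ee
with boundary values $-w_{n+1}(\vep_n),-w_{n+1}(K_n)>0$ (using the just-proved strict negativity of $w_{n+1}$ at the interior points $\vep_n,K_n$ of $(\vep_{n+1},K_{n+1})$); if $w_n-w_{n+1}$ were negative somewhere inside, its interior minimum $r_0$ would give $(w_n-w_{n+1})''(r_0)\geq0$ while the right-hand side there would be strictly negative, a contradiction, and strong maximum principle then promotes $w_n\geq w_{n+1}$ to the strict inequality stated in (\ref{2.8}).

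The step I expect to be the main obstacle is (\ref{2.7}): the exploding weight $\Xi^{-2}$ as $r\to0$ prevents any use of the potential term $\tfrac{2}{\Xi^2}(\e^w-1-w)$ to dominate $w_n$ in a uniform fashion, and the Poincar\'e constant on $(\vep_n,K_n)$ blows up as $K_n\to\infty$. What saves the estimate is the $n$-independent compact support of $G$, which shifts the problem to controlling $w_n$ only on a fixed sub-interval, where the single boundary condition $w_n(\vep_n)=0$ plus the $L^2$ norm of $w_n'$ suffice.
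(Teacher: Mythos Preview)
Your argument is correct, and for (\ref{2.6}) and (\ref{2.8}) it is essentially identical to the paper's: both extend $w_n$ by zero to compare $I_{n+1}$ with $I_n$, and both derive $w_n<0$ and $w_{n+1}<w_n$ via the maximum principle applied to the linearized difference equation (the paper writes the difference as $(w_{n+1}-w_n)''=\tfrac{2}{\Xi^2}\e^{\xi_n}(w_{n+1}-w_n)$, your second-derivative-at-a-minimum contradiction is the same mechanism).

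The one genuine divergence is in the coerciveness bound (\ref{2.7}). The paper does \emph{not} localize to the fixed compact support $[a,b]$ of $G$; instead it establishes the one-dimensional Hardy inequality
\[
\int_0^\infty \frac{f^2(r)}{r^2}\,\dd r\;\leq\;4\int_0^\infty (f_r)^2\,\dd r
\]
(obtained by integrating $\tfrac{f^2}{r^2}$ by parts against $-\tfrac{1}{r}$ and applying Schwarz), and then writes $\int G w_n\leq(\int r^2G^2)^{1/2}(\int w_n^2/r^2)^{1/2}$ to absorb the linear term. Your route---bounding $|w_n(r)|\leq\sqrt{b}\,\|w_n'\|_{L^2}$ on $[a,b]$ via the fundamental theorem of calculus from the boundary point $\vep_n$---is more elementary and exploits directly the information you singled out as decisive, namely that $\mbox{supp}(G)$ sits in an $n$-independent compact interval. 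The paper's Hardy approach has the advantage of not requiring any knowledge of $\mbox{supp}(G)$ beyond $\int r^2G^2<\infty$, and the inequality itself is reused later (e.g., to pass to $W^{1,2}_{\rm loc}$ bounds). Both arguments yield the same constants $C_1=\tfrac14$, $C_2$ depending only on $G$.
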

\begin{proof}
First we recall that for $n\geq1$ the function $w_n$ is the unique minimizer of the functional $I_n$ in $W^{1,2}_0(\vep_n,K_n)$.
Set $w_n=0$ for $r<\vep_n$ and $r>K_n$. Then $w_n\in W^{1,2}_0(\vep_{n+1},K_{n+1})$ and $I_{n+1}(w_n)= I_n(w_n)$. However, $w_{n+1}$
is the global minimizer of $I_{n+1}$ in $W^{1,2}_0(\vep_{n+1},K_{n+1})$. Hence $I_{n+1}(w_{n+1})\leq I_{n+1}(w_n)$ and (\ref{2.6}) is established.

Let $f(r)$ be a function so that $f(r)=0$ when $r>0$ is sufficiently small or large. Then, an integration by
parts gives us
\be\label{2.9}
\int_0^\infty\frac1{r^2} f^2(r)\,\dd r=2\int_0^\infty \frac1r f(r) f'(r)\,\dd r.
\ee
Thus, by the Schwarz inequality, we have
\be\label{2.10}
\int_0^\infty\frac1{r^2} f^2 (r)\,\dd r\leq 4\int_0^\infty (f_r)^2\,\dd r.
\ee
Using the elementary inequality $\e^w-1-w\geq0$ again and (\ref{2.10}), we have
\bea\label{2.11}
I_n(w_n)&\geq&\frac12\int_{\vep_n}^{K_n}( w_n')^2\,\dd r -\bigg(\int_{\vep_n}^{K_n} r^2 G^2\,\dd r \bigg)^{1/2}
\bigg(\int_{\vep_n}^{K_n}\frac{w^2_n}{r^2}\,\dd r\bigg)^{1/2}\nn\\
&\geq&\frac14\int_{\vep_n}^{K_n}(w_n')^2\,\dd r  -4\int_{\vep_n}^{K_n}r^2 G^2\,\dd r,
\eea
which gives us (\ref{2.7}).

Finally, applying the maximum principle and the condition $G(r)\geq0$ in (\ref{ODE2})--(\ref{BC2}),
 we see that $w_n<0$ in $(\vep_n,K_n)$. In particular, $w_{n+1}<0$ on $[\vep_n,K_n]$.  Now in $(\vep_n,K_n)$ the function $w_{n+1}-w_n$ satisfies
\be\label{2.12}
(w_{n+1}-w_n)_{rr}=\frac2{{\Xi}^2}\e^{\xi_n}(w_{n+1}-w_n) \mbox{ where $\xi_n$ lies between $w_n$ and $w_{n+1}$},
\ee
and $(w_{n+1}-w_n)(r)<0$ for $r=\vep_n$ and $r=K_n$. Applying the maximum principle to (\ref{2.12}) gives us $w_{n+1}<w_n$ in $(\vep_n,K_n)$ or (\ref{2.8}).
\end{proof}

\begin{lemma} \label{lemma2.4} The sequence $\{w_n\}$
constructed in Lemma \ref{lemma2.2} is weakly convergent in $W^{1,2}_{\rm\mbox{loc}}(0,\infty)$. The so-obtained weak limit, say $w$, is a classical solution of the equation
(\ref{ODE1}). In fact, the convergence $w_n\to w$ ($n\to\infty$) may be achieved in any $C^k[a,b]$ topology for arbitrary $0<a<b<\infty$. In particular, we have $w\leq0$ everywhere.
\end{lemma}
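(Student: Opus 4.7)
The plan is to extract uniform bounds on $\{w_n\}$ from Lemma \ref{lemma2.2} and then pass to the limit by standard compactness. The first step is a two-sided control on $I_n(w_n)$: since the constant $0$ belongs to $W^{1,2}_0(\vep_n,K_n)$ and $I_n(0)=0$, the minimization property gives $I_n(w_n)\leq 0$, and combined with the coerciveness estimate (\ref{2.7}) this forces $\int_{\vep_n}^{K_n}(w_n')^2\,\dd r$ to be bounded uniformly in $n$. Exploiting the boundary condition $w_n(\vep_n)=0$ and Cauchy--Schwarz, I would then deduce the pointwise estimate $|w_n(r)|\leq \sqrt{r-\vep_n}\,\|w_n'\|_{L^2}\leq C\sqrt{r}$ whenever $r>\vep_n$. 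Thus $\{w_n\}$ is uniformly bounded in $W^{1,2}(a,b)$ for every $0<a<b<\infty$, and the monotonicity (\ref{2.8}) guarantees that the pointwise limit $w(r):=\lim_n w_n(r)$ exists, is finite, and satisfies $-C\sqrt{r}\leq w(r)\leq 0$.

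Next I would identify this pointwise limit with a weak $W^{1,2}_{\rm\mbox{loc}}(0,\infty)$ limit. On each compact subinterval $[a,b]$ the uniform $W^{1,2}$ bound yields a weakly convergent subsequence by Banach--Alaoglu; uniqueness of the monotone pointwise limit then forces the whole sequence to converge weakly to $w$. To upgrade to classical convergence I would bootstrap via (\ref{ODE2}). Because $\Xi(r)\geq \Xi(a)>0$ on $[a,b]$ and $w_n\leq 0$, the right-hand side $\frac{2}{\Xi^2}(\e^{w_n}-1)+G(r)$ is uniformly bounded in $L^\infty[a,b]$, giving a uniform $W^{2,\infty}[a,b]$ bound on $\{w_n\}$. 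Arzela--Ascoli then delivers $C^1$-convergence on $[a,b]$, and repeated differentiation of the ODE combined with the smoothness of $G$ and $\Xi$ and the positivity of $\Xi$ on compact subsets of $(0,\infty)$ yields uniform $C^k$-bounds and hence $C^k$-convergence for every $k$. Passing to the limit in (\ref{ODE2}) shows that $w$ solves (\ref{ODE1}) classically on $(0,\infty)$, while $w\leq 0$ is inherited immediately from $w_n<0$.

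The main obstacle is ruling out the degenerate possibility that the decreasing sequence escapes to $-\infty$ on some compact subinterval of $(0,\infty)$, which would collapse the entire compactness program. The decisive ingredient here is the vanishing Dirichlet datum $w_n(\vep_n)=0$ at the shrinking left endpoint: combined with the $L^2$ gradient bound inherited from the coerciveness estimate (\ref{2.7}), it pins the sequence down pointwise so that the limit cannot blow down. Once this is in hand, converting the $L^\infty$ bound on the right-hand side of (\ref{ODE2}) into full $C^k$-regularity is routine.
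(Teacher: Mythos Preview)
Your argument is correct and follows essentially the same route as the paper: uniform control on $\int(w_n')^2$ from the coerciveness bound (\ref{2.7}), a pointwise bound to prevent escape to $-\infty$, monotone convergence plus weak $W^{1,2}_{\rm loc}$ compactness, and then bootstrapping regularity through the equation. The only cosmetic differences are that the paper bounds $I_n(w_n)$ from above via the monotonicity (\ref{2.6}) rather than your observation $I_n(w_n)\leq I_n(0)=0$, and obtains the pointwise control on $w_n$ from the Hardy inequality (\ref{2.10}) rather than your direct Cauchy--Schwarz estimate from the left endpoint; both sets of choices are equally valid.
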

\begin{proof}
From (\ref{2.6}) and (\ref{2.7}), we see that there is an absolute constant $C>0$ such that
\be\label{2.13}
\sup_n\bigg\{\int_{\vep_n}^{K_n}(w_n'(r))^2\,\dd r\bigg\}\leq C.
\ee
From (\ref{2.10}) and (\ref{2.13}), we see that $\{w_n\}$ is bounded in $W^{1,2}(a,b)$ for arbitrary $0<a<b<\infty$. In view of the monotonicity (\ref{2.8}), we conclude that $\{w_n\}$ is weakly
convergent in $W^{1,2}(a,b)$. Using extension, we can find a function $w\in W^{1,2}_{\rm\mbox{loc}}(0,\infty)$ such that  $\{w_n\}$
converges to $w$ in $W^{1,2}(a,b)$ for any $0<a<b<\infty$.

Choose $n_0\geq1$ such that $(a,b)\subset (\vep_n,K_n)$ when $n\geq n_0$. Thus, for any test function
$\xi\in C_0^1(a,b)$, we have
\be\label{2.14}
\int_a^b \bigg\{ w_n' \xi'+\frac2{{\Xi}^2}(\e^{w_n}-1)\xi +G(r)\xi\bigg\}\,\dd r=0,\quad n\geq n_0.
\ee
Using the weak convergence of $\{w_n\}$ in $W^{1,2}(a,b)$, we see that
$\{w_n\}$ is convergent in $C[a,b]$ as well. 
 We can take $n\to\infty$ in (\ref{2.14}) to show that $w$ is a weak solution
of (\ref{ODE1}) over $(a,b)$. Since $(a,b)$ is arbitrary, we see that $w$ is a weak solution of (\ref{ODE1}) over the full domain $r>0$. Standard elliptic theory then implies that
$w$ is a classical solution of (\ref{ODE1}).

The convergence in any $C^k[a,b]$ topology for arbitrary $0<a<b<\infty$ follows from applying elliptic estimates in the equation
\be 
w_n''=\frac2{{\Xi}^2}(\e^{w_n}-1)+G(r)
\ee
and the property $w_n\to w$ in $C[a,b]$ as $n\to\infty$.
\end{proof}

\begin{lemma} \label{lemma2.5}
Let $w$ be the solution of (\ref{ODE1}) obtained in Lemma \ref{lemma2.4}. Then it satisfies the boundary condition (\ref{BC1}) for some unique number $w_\infty\leq0$
so that for arbitrarily small $\vep>0$ we have $w(r)={\rm\mbox{O}}(r^{2-\vep})$ as $r\to0$ and $w(r)-w_\infty={\rm\mbox{O}}({\rm\e}^{-4r/S})$ as $r\to\infty$.
\end{lemma}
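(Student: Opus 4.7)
\smallskip\noindent\textbf{Proof plan.} I would split the analysis into the behavior near $r=0$ and near $r=\infty$, running a barrier/comparison argument on the approximants $w_n$ from Lemma \ref{lemma2.2} and passing to the limit.

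Near $r=0$, the first step is to show that $w$ is continuous at the origin with $w(0)=0$. From the uniform bound (\ref{2.13}) and weak lower semicontinuity, $\int_0^\infty(w')^2\,\dd r<\infty$; applying the Hardy-type bound (\ref{2.10}) to each $w_n$ extended by zero and passing to the limit gives $\int_0^\infty w^2/r^2\,\dd r<\infty$. Thus $w\in H^1(0,R)$ for every $R>0$, so one-dimensional Sobolev embedding makes $w$ continuous on $[0,\infty)$, and the Hardy integrability forbids $w(0)\neq 0$. For the sharper rate $\mbox{O}(r^{2-\vep})$, fix small $\vep>0$ and take the subsolution $\psi(r)=-Kr^{2-\vep}$ on a sub-interval $(0,r_1)$ inside the zero set of $G$. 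Using $\Xi(r)\leq r\cosh(2r/S)$ and the elementary inequality $1-\e^{-x}\geq x-x^2/2$ for $x\geq 0$, the subsolution condition $\psi''\geq(2/\Xi^2)(\e^\psi-1)$ reduces to
\[
(2-\vep)(1-\vep)\cosh^2(2r_1/S)+Kr_1^{2-\vep}\leq 2.
\]
Choosing $K=|w(r_1)|/r_1^{2-\vep}$ gives $\psi(r_1)=w(r_1)\leq w_n(r_1)$ by the monotonicity (\ref{2.8}), while $\psi(\vep_n)<0=w_n(\vep_n)$. Since $\partial_w[(2/\Xi^2)(\e^w-1)+G]>0$, the standard maximum-principle comparison yields $\psi\leq w_n$ on $(\vep_n,r_1)$, and letting $n\to\infty$ produces $-Kr^{2-\vep}\leq w(r)\leq 0$ on $(0,r_1)$.

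Near $r=\infty$, pick $R_0$ with $G\equiv 0$ on $[R_0,\infty)$. Since $w_n\leq 0$, the ODE gives $w_n''=(2/\Xi^2)(\e^{w_n}-1)\leq 0$ on $[R_0,K_n]$; concavity and $w_n(K_n)=0$ then yield the linear lower bound $w_n(r)\geq w_n(R_0)(K_n-r)/(K_n-R_0)$. Letting $n\to\infty$ gives $w(r)\geq w(R_0)$, so $w$ is bounded; concavity passes to the limit, so $w'$ is nonincreasing on $[R_0,\infty)$ and has a limit $L$ at $\infty$. Boundedness of $w$ below excludes $L<0$, and $w\leq 0$ excludes $L>0$, so $L=0$ and $w$ is nondecreasing near infinity with a uniquely determined finite limit $w_\infty\leq 0$. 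Finally, since $|\e^w-1|\leq 1$ and $\int\dd u/\sinh^2 u=-\coth u$,
\[
|w'(r)|\leq\int_r^\infty\frac{2}{\Xi^2}\,\dd s=\frac{4}{S}\bigl(\coth(2r/S)-1\bigr)=\mbox{O}(\e^{-4r/S}),
\]
and one further integration gives $|w(r)-w_\infty|=\mbox{O}(\e^{-4r/S})$.

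The main obstacle I anticipate is the compatibility of the two requirements on the subsolution near $r=0$: the differential inequality on $(0,r_1)$ caps $Kr_1^{2-\vep}$ at a size of order $\vep$, while matching $\psi$ to $w$ at $r=r_1$ requires $Kr_1^{2-\vep}\geq|w(r_1)|$. This circle is broken by first proving $w(0)=0$ via the Hardy estimate, which then allows $r_1$ to be chosen so small that $|w(r_1)|$ falls below the required threshold, unlocking the subsolution argument.
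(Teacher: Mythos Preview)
Your plan is sound and close in spirit to the paper's proof; both rely on barrier comparison near $r=0$ and ODE integration near $r=\infty$, but the execution differs. For $w(0)=0$ the paper gets the crude estimate $|w(r)|\leq Cr^{1/2}$ directly from $w_n(r)=\int_{\vep_n}^r w_n'$ and Cauchy--Schwarz, then runs the $r^{2-\vep}$ comparison on $w$ itself via $U=w+Cr^{2-\vep}$ and the maximum principle on a short interval $(0,r_0)$. You instead use Hardy plus one-dimensional Sobolev embedding for $w(0)=0$, and then compare the subsolution $\psi=-Kr^{2-\vep}$ against the approximants $w_n$ (which gives clean Dirichlet data at $r=\vep_n$) before letting $n\to\infty$; this is equally valid, and working at the level of $w_n$ is what makes your circularity worry disappear. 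Near infinity, you argue via concavity of $w_n$ on $[R_0,K_n]$ to get boundedness and $w'\to0$, whereas the paper extracts a sequence with $w'(r_n)\to0$ from $\int_0^\infty(w')^2<\infty$ and then integrates the pointwise bound $|w''|\leq(64/S^2)\e^{-4r/S}$; both routes yield the same $\mbox{O}(\e^{-4r/S})$ decay, and your concavity argument is arguably cleaner.

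There is one gap. The paper reads ``unique number $w_\infty$'' as uniqueness among \emph{all} solutions of (\ref{ODE1}) satisfying $w(0)=0$: it shows, via a maximum-principle and monotonicity argument on the difference $W=W_1-W_2$ of two hypothetical solutions with $W_1(\infty)>W_2(\infty)$, that no such pair exists. Your phrase ``uniquely determined finite limit'' is only the trivial observation that a convergent function has a single limit; you have not ruled out that a different solution of the same boundary-value problem might converge to a different constant. That additional argument needs to be supplied to match the lemma as the paper intends it.
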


\begin{proof} Let $\{w_n\}$ be the sequence of solutions of (\ref{ODE2})--(\ref{BC2}) obtained in Lemmas \ref{lemma2.1} and \ref{lemma2.2}. Then for any
$r\in(\vep_n,K_n)$, we have by the Schwarz inequality and (\ref{2.13}) the uniform bound
\be 
|w_n(r)|\leq \int_{\vep_n}^r|w_n'(\rho)|\,\dd \rho\leq r^{1/2}\bigg(\int_{\vep_n}^{K_n}(w_n'(r))^2\,\dd r\bigg)^{1/2}\leq C r^{1/2},
\ee
where $C>0$ is independent of $n$ and $r$. Hence $w(r)=\mbox{O}(r^{1/2})$ when $r\to0$ which is a crude preliminary estimate. To improve it, we consider a comparison function
\be \label{W}
W(r)=Cr^{2-\vep},\quad r>0,\quad C>0, \quad\vep\in(0,1),
\ee
and set $U=w+W$. Choose $r_0>0$ small such that $G(r)=0$ for $0<r<r_0$. In view of (\ref{ODE1}), (\ref{W}), 
\be\label{V0} 
{\Xi}(r)=r\cosh(\frac{2\xi(r)}S), \quad \xi(r)\in(0,r), 
\ee
and $w(0)=0$, we have
\be\label{U}
U_{rr}=\frac2{{\Xi}^2} (\e^w-1)+(2-\vep)(1-\vep) r^{-2} W
= r^{-2} (K(r)w+(2-\vep)(1-\vep) W),
\ee
where $0<r<r_0$ and $K(r)\to2$ as $r\to0$. Hence, when $r_0$ is small, we have $K(r)>(2-\vep)(1-\vep)$ for $r\in (0,r_0)$. Inserting this condition into (\ref{U}), we have 
\be \label{U1}
U_{rr}<r^{-2}K(r)U,\quad 0<r<r_0.
\ee
Using $U(0)=w(0)+W(0)=0$ and assuming $C>0$ in (\ref{W}) is large enough so that $U(r_0)=w(r_0)+W(r_0)>0$. Applying these in (\ref{U1}), we get $U(r)>0, r\in(0,r_0)$. That is,
we have obtained the estimate
\be 
0\geq w(r)>-Cr^{2-\vep},\quad 0<r<r_0,
\ee
as claimed.

In view of (\ref{2.13}), we deduce that
\be 
\int_0^\infty (w_r)^2\,\dd r<\infty.
\ee
Therefore, there is a sequence $\{r_n\}$, $r_n\to\infty$ as $n\to\infty$, such that 
\be \label{2.25}
\lim_{n\to\infty}w_r(r_n)=0.
\ee
 Furthermore, since $G$ in (\ref{ODE1}) is of compact support, there is
some $K>0$ such that $G(r)=0$ for $r> K$. Thus, using $w\leq0$ and the definition (\ref{4}), we see that $w_{rr}$ satisfies
\be  \label{2.26}
|w_{rr}(r)|=\frac2{{\Xi}^2} |\e^w-1|\leq\frac{64}{S^2}\e^{-\frac{4r}S},\quad r\geq K_0=\max\left\{K,\frac S2\ln 2\right\}.
\ee
Integrating (\ref{ODE1}), using (\ref{2.26}), and applying (\ref{2.25}), we arrive at
\be\label{2.27}
0\leq w_r(r)\leq \frac{16}S \e^{-\frac{4r}S},\quad r>K_0.
\ee
Integrating (\ref{2.27}), we see that there is some number $w_\infty\leq0$ such that
\be 
\lim_{r\to\infty} w(r)=w_\infty, \quad 0\leq w_\infty -w(r)\leq 4\e^{-\frac{4r}S},\quad r>K_0,
\ee
as anticipated.

To see the uniqueness of $w_\infty$, we assume there are solutions of (\ref{ODE1}), say $W_1$ and $W_2$ such that
\be 
W_1(\infty)=W_{1,\infty},\quad W_2(\infty)=W_{2,\infty},\quad W_{1,\infty}>W_{2,\infty}.
\ee
Let $W=W_1-W_2$. Then $W$ satisfies 
\be \label{Weq}
W_{rr}=\frac2{{\Xi}^2}\e^{\xi}\, W,\quad r>0,
\ee
where $\xi$ lies between $W_1$ and $W_2$. Since $W(0)=0$ and $W(\infty)>0$, we have $W(r)>0$ for all $r>0$. Otherwise, let $W(r_0)\leq0$ for some $r_0>0$. We may assume that $W$ attains its
global minimum at $r_0$. If $W(r_0)=0$, then $W_r(r_0)=0$, which implies $W\equiv0$ by the uniqueness of solution to the initial value problem of an ordinary differential equation, contradicting
the condition $W(\infty)>0$. So $W(r_0)<0$ but this contradicts the fact $W_{rr}(r_0)\geq0$. Hence $W(r)>0$ for all $r>0$.

Since $W(\infty)$ is finite, there is a sequence $\{r_n\}$, $r_n\to\infty$ as $n\to\infty$ such that $W_r(r_n)\to0$ as $n\to\infty$. Integrating (\ref{Weq}) over $(r,r_n)$ and letting $n\to\infty$, we have
\be 
W_r(r)=-\int_r^\infty\frac2{{\Xi}^2}\e^{\xi}\, W\,\dd\rho,\quad r>0.
\ee
Using $W(0)=0$, $W>0$, and the above, we see that $W(r)$ decreases. In particular, $W(\infty)<0$, which is another contradiction.

The proof of the lemma is complete.
\end{proof}

Despite of the above uniqueness result, we are unable to show that $w_\infty=0$. 

We are now ready to solve (\ref{2.2}). We can state

\begin{theorem}\label{theorem} The equation (\ref{2.2}) has a bounded solution $v$ satisfying $v(r,t)={\rm\mbox{O}}(r^{2-\vep})$ as $r\to0$ where $\vep>0$ is arbitrarily small.
\end{theorem}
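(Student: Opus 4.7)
The plan is to apply a sub- and supersolution argument with the pair $v^+ = -u_0$ from Lemma \ref{lemma0} (upper solution) and the radial function $w$ from Lemma \ref{lemma2.5} (lower solution). First I would verify that $w$ is actually a lower solution of (\ref{2.2}): since $w$ depends only on $r$, $\Delta w = w_{rr}$, so substituting the ODE (\ref{ODE1}) and using the pointwise inequalities $\e^{u_0 + w} \leq \e^w$ (from $u_0 \leq 0$) and $G(r) \geq g(r,t)$ (from the definition (\ref{def})) yields
\[
\Delta w = \frac{2}{\Xi^2}(\e^w - 1) + G(r) \geq \frac{2}{\Xi^2}(\e^{u_0+w} - 1) + g(r,t),
\]
as required. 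The ordering $w \leq 0 \leq v^+$ is immediate because $u_0 \leq 0$ and $w \leq 0$.

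Because $\cal M$ is unbounded in $r$, I would implement the scheme by exhaustion. Choose $\vep_n \downarrow 0$ and $K_n \uparrow \infty$ with both $\vep_n$ and $K_n$ outside the compact $r$-support of $\eta$, so that $u_0$, $v^+$, and $g$ all vanish at $r = \vep_n, K_n$, and on the bounded cylinder $\Omega_n = (\vep_n, K_n) \times (\bfR/\beta\bfZ)$ solve the Dirichlet--periodic problem for (\ref{2.2}) with boundary data $v \equiv 0$ on the two circles $\{r = \vep_n\} \cup \{r = K_n\}$ and periodicity in $t$. A standard monotone iteration starting from $v^+$, linearized via a shift $(\Delta - M)$ with $M$ large enough on $\Omega_n$ to guarantee monotonicity of the iterate map (possible since $2\e^{u_0 + \cdot}/\Xi^2$ is bounded on $\Omega_n$), produces a decreasing sequence trapped between $w$ and $v^+$, whose limit $v_n \in C^{2,\alpha}(\overline{\Omega}_n)$ is a classical solution of (\ref{2.2}) on $\Omega_n$ with $w \leq v_n \leq v^+$.

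To pass to the infinite strip, the uniform sup-norm bound $\|v_n\|_\infty \leq \max(\|w\|_\infty, \|v^+\|_\infty)$ combined with interior Schauder estimates on compact subsets of $\cal M$ furnishes, via a diagonal extraction, a $C^{2,\alpha}_{\mbox{loc}}$-convergent subsequence whose limit $v$ is a classical, $\beta$-periodic-in-$t$ solution of (\ref{2.2}) on all of $\cal M$ still satisfying $w \leq v \leq v^+$; boundedness of $v$ is then automatic. For the behavior near $r = 0$: compact support of $\eta$ in $(0,\infty) \times (0,\beta)$ gives some $r_0 > 0$ with $u_0 \equiv 0$ on $\{0 < r < r_0\}$, hence $v^+ \equiv 0$ there, so the sandwich forces $w(r) \leq v(r,t) \leq 0$, and the estimate $w(r) = \mbox{O}(r^{2-\vep})$ from Lemma \ref{lemma2.5} transfers directly to $v(r,t) = \mbox{O}(r^{2-\vep})$. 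The main difficulty lies in the combination of the noncompact domain and the singular weight $2/\Xi^2 \sim 2/r^2$ at $r = 0$; the latter was already tamed in Lemma \ref{lemma2.5} through the comparison function $Cr^{2-\vep}$, so the essential remaining work is organizing the exhaustion and verifying that the sandwich survives the limit, both of which are routine once the lower-solution property of $w$ is in hand.
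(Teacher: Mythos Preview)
Your proposal is correct and follows essentially the same route as the paper: both take $v^{+}=-u_0$ and $v^{-}=w$ as the ordered pair of super- and subsolutions, verify the lower-solution inequality via $u_0\le 0$ and $G(r)\ge g(r,t)$, and then read off boundedness and the $\mathrm{O}(r^{2-\vep})$ decay from the sandwich $w\le v\le -u_0$ together with Lemma~\ref{lemma2.5}. The only difference is cosmetic: where the paper writes a single sentence (``using elliptic method, we get a solution $v$ satisfying $v^{-}\le v\le v^{+}$''), you spell out the exhaustion by bounded cylinders, the monotone iteration, and the diagonal Schauder extraction---all of which is the standard unpacking of that sentence.
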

\begin{proof} Let $w$ be the solution of (\ref{ODE1}) stated in Lemma \ref{lemma2.5}. Since $u_0\leq0$, we have
\be 
\Delta w\geq \frac2{{\Xi}^2}(\e^{u_0+w}-1)+g(r,t).
\ee
In other words, $v^-=w$ is a lower solution of the equation (\ref{2.2}). Combining with Lemma \ref{lemma0}, we have $v^+\geq 0\geq v^-$. Using elliptic method, we get a solution
$v$ of (\ref{2.2}) satisfying $v^-\leq v\leq v^+$. Since $v^+=-u_0$ is of compact support in $(0,\infty)\times(0,\beta)$, we obtain from Lemma \ref{lemma2.5} that $v$ is bounded
and satisfies $v(r,t)=\mbox{O}(r^{2-\vep})$ as $r\to0$ for any small number $\vep>0$.
\end{proof}

\section{Calculation of topological charge}
\setcounter{equation}{0}

Let $v=v(r,t)$ be the $\beta$-periodic solution of (\ref{2.2}) obtained in Theorem \ref{theorem}. Define the $\beta$-averaged function by
\be 
\bar{v}(r)=\frac1\beta\int_0^\beta v(r,t)\,\dd t.
\ee
From the uniform decay estimate $v(r,t)=\mbox{O}(r^{2-\vep})$ (when $r>0$ is small), $\bar{v}(0)=0$, and the L'Hopital's rule, we have
\be \label{vr1}
\lim_{r\to0}\bar{v}_r(r)=\lim_{r\to0}\frac{\bar{v}(r)}r=\frac1\beta\lim_{r\to0}\int_0^\beta \frac1r v(r,t)\,\dd t=0.
\ee
On the other hand, since $\bar{v}$ is bounded, there is a sequence $\{r_n\}$, $r_n\to\infty$ as $n\to\infty$, such that
$\bar{v}_r(r_n)\to0$ as $n\to\infty$. Integrating (\ref{2.2}) over $(r,r_n)\times(0,\beta)$ and letting
$n\to\infty$, we have
\bea \label{vr2}
\bar{v}_r(r)&=&\frac1\beta\int_r^{\infty}\int_0^\beta\frac2{{\Xi}^2}(1-\e^{v+u_0})\,\dd t\,\dd \rho\nn\\
&\leq&\int_r^\infty \frac{64}{S^2}\e^{-\frac{4\rho}S}\,\dd\rho\to0\quad\mbox{as }r\to\infty,
\eea
where we have used the property $v+u_0\leq0$.

Integrating $\Delta v$ over $0<r<\infty, 0<t<\beta$ and applying (\ref{vr1}) and (\ref{vr2}), we have
\be \label{ddv}
\int_0^\infty\int_0^\beta \Delta v\,\dd t\,\dd r=\beta\left(\lim_{r\to\infty}\bar{v}_r(r)-\lim_{r\to0}\bar{v}_r(r)
\right)=0.
\ee

On the other hand, recall that $u_0=\eta U_0$ is of compact support and $U_0$ may be decomposed as the sum
of a regular and singular parts in the form
\be \label{U0}
U_0(x)=-\sum_{j=1}^N\ln(1+|x-p_j|^2) +\sum_{j=1}^N\ln|x-p_j|^2\equiv {\tilde R}(x)+{\tilde S}(x). 
\ee
Thus, with ${\cal M}=\{(r,t)\,|\,0<r<\infty,0\leq t\leq \beta\}$, we insert (\ref{U0}) to have
\bea\label{g}
\int_{\cal M} g(r,t)\,\dd r\,\dd t&=&\lim_{\vep\to0}\int_{{\cal M}\setminus\cup_{j=1}^N\{x\,|\,|x-p_j|\geq\vep\}}
(-\Delta u_0)\,\dd r\,\dd t\nn\\
&=&-\lim_{\vep\to0}\int_{{\cal M}\setminus\cup_{j=1}^N\{x\,|\,|x-p_j|\geq\vep\}}
\Delta (\eta {\tilde S})\,\dd r\,\dd t\nn\\
&=&\lim_{\vep\to0}\sum_{j=1}^N\oint_{|x-p_j|=\vep}\frac{\pa (\eta {\tilde S})}{\pa n}\,\dd S_\vep=4\pi N,
\eea
where $\dd S_\vep$ is the line element and $\pa/\pa n$ denotes the outnormal derivative on the circle $|x-p_j|=\vep$
($j=1,\cdots,N$).

Integrating (\ref{2.2}) over ${\cal M}$ and using (\ref{ddv}) and (\ref{g}), we obtain
\be 
\int_0^\infty\int_0^\beta\frac1{{\Xi}^2}(1-\e^{v+u_0})\,\dd t\,\dd r=2\pi N.
\ee
This result is important because through the relation $|\phi|^2=\e^u=\e^{v+u_0}$ and (\ref{2}), we arrive at the
anticipated flux quantization condition
\be \label{flux}
\Phi=\int_{\cal M} F_{tr}\,\dd t\,\dd r=\int_{\cal M} \frac1{{\Xi}^2}(1-|\phi|^2)\,\dd t\,\dd r=2\pi N.
\ee

In order to calculate the total action, we need to establish some suitable decay estimates for the gradient of
a solution obtained, which will be considered in the next section.


\section{Decay estimates for the gradient of solution}
\setcounter{equation}{0}

In order to compute the action of a multiple instanton, we need to derive suitable decay estimates for the first derivatives
of the solution $v$ of (\ref{2.2}) obtained earlier. 

We shall first consider the decay estimates near $r=0$. Since both $u_0(r,t)$ and $g(r,t)$ are compactly supported in ${\cal M}$, we may
choose $r_0>0$ sufficiently small so that the supports of $u_0$ and $g$ are contained in $\{(r,t)\in{\cal M}\,|\,r> 2r_0\}$ (say). Hence $v$ satisfies
\be \label{3.1}
\Delta v=\frac2{{\Xi}^2}(\e^v-1),\quad 0<r<2r_0.
\ee

\begin{lemma}\label{lemma3.1} Let $v$ be the solution of (\ref{2.2}) obtained earlier. For any arbitrarily small number $\vep>0$, there is a constant $C(\vep)>0$ independent of
$r,t$ such that the estimates
\be\label{3.2}
|v_t|\leq C(\vep) r^{2-\vep},\quad - C(\vep) r^{1-\vep}\leq  v_r\leq C(\vep) r^{2-\vep},\quad 0<r<r_0,
\ee
hold. In particular, $|\nabla v|\to0$ uniformly as $r\to0$.
\end{lemma}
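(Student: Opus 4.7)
The plan is to combine an interior Schauder rescaling argument with a comparison-function argument that exploits the near-singular behavior of $\Xi^{-2}$ as $r\to 0$. Since $g$ and $u_0$ are compactly supported away from $r=0$, equation (\ref{2.2}) reduces on $\{0<r<2r_0\}$ to the autonomous equation (\ref{3.1}), which I will use throughout.

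First, I would establish a preliminary gradient bound by rescaling. For any point $(r_*,t_*)$ with $0<r_*<r_0$, introduce $y=(x-x_0)/r_*$ on the unit ball and set $\tilde v(y)=v(x_0+r_*y)$. Because $\Xi(r)=r+\mbox{O}(r^3)$ as $r\to 0$, the rescaled coefficient $r_*^2\Xi^{-2}$ is uniformly bounded on $B_{1/2}$, and the rescaled right-hand side is pointwise dominated by $C|\tilde v|\leq Cr_*^{2-\vep}$ via Theorem \ref{theorem}. Standard $C^{1,\alpha}$ Schauder estimates then give $\|\tilde v\|_{C^{1,\alpha}(B_{1/2})}\leq Cr_*^{2-\vep}$, which unscales to $|\nabla v(x_0)|\leq C(\vep)r_*^{1-\vep}$; this already proves the asserted lower bound $v_r\geq -C(\vep)r^{1-\vep}$.

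Next, to sharpen the $v_t$ estimate I differentiate (\ref{3.1}) in $t$ to obtain the linear equation $\Delta v_t=c(r,t)\,v_t$ with $c=(2/\Xi^2)e^v\geq 0$, noting that $v_t$ has zero mean in $t$ by $\beta$-periodicity. For the comparison function $W(r)=Kr^{2-\vep}$, a direct computation gives $\Delta W-cW=Kr^{-\vep}[(2-\vep)(1-\vep)-2r^2\Xi^{-2}e^v]$, which is strictly negative near $r=0$ since $r^2\Xi^{-2}\to 1$, $e^v\to 1$ and $(2-\vep)(1-\vep)<2$. The weak maximum principle for the operator $-\Delta+c$ applied to $W\pm v_t$ on $(\delta,r_0)\times[0,\beta]$ (with periodic identification in $t$) then yields $|v_t|\leq C(\vep)r^{2-\vep}$ after $\delta\to 0$. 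For the companion upper bound $v_r\leq C(\vep)r^{2-\vep}$, differentiating (\ref{3.1}) in $r$ produces an equation for $v_r$ with inhomogeneous term $-4\Xi'\Xi^{-3}(e^v-1)\geq 0$ (since $v\leq 0$ near $r=0$), and because $v(0,t)=0$ together with $v\leq 0$ force $v_r$ to be non-positive to leading order, the same type of supersolution argument controls any positive excursion.

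The principal obstacle is closing the maximum-principle argument in the second step: the preliminary bound $|v_t|\leq Cr^{1-\vep}$ from Step~1 is weaker than the supersolution $Kr^{2-\vep}$ at $r=\delta$, so the naive $\delta\to 0$ limit does not go through for a fixed $K$. To resolve this I would bootstrap by iterating the Schauder rescaling of the linearized equation for $v_t$, using at each step the mean-zero-in-$t$ property to raise the decay exponent until the limiting value $2-\vep$ is reached; equivalently, one can Fourier-expand $v_t$ in $t$ and verify mode-by-mode that the bounded solutions of the limiting ODEs $a_k''-(2\pi k/\beta)^2 a_k=(2/r^2)a_k$ behave like $r^2$ at $r=0$ for every $k\ne 0$.
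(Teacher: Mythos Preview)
Your overall strategy---first obtain a crude $C^1$ bound, then upgrade via comparison functions and the maximum principle---is the same as the paper's. The genuine issue is the ``obstacle'' you flag at the end: it is self-inflicted, and your proposed cures (bootstrapping the Schauder estimate, Fourier decomposition in $t$) are both unnecessary and, in the Fourier case, not rigorous because the coefficient $(2/\Xi^2)e^v$ depends on $t$ and couples the modes.

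The paper sidesteps the whole difficulty by working with \emph{difference quotients} $v^h(r,t)=\big(v(r,t+h)-v(r,t)\big)/h$ rather than with $v_t$ directly. Each $v^h$ satisfies the same linear equation $\Delta v^h=(2/\Xi^2)e^{u^h}v^h$, and crucially $v^h(0,t)=0$ for every fixed $h$ because $v(0,\cdot)\equiv 0$. Hence the comparison $W(r)=Cr^{2-\vep}$ dominates $v^h$ at \emph{both} endpoints of $(0,r_0)$---at $r=0$ trivially, and at $r=r_0$ by choosing $C$ large depending only on the uniform bound $\sup|v^h|\leq\sup|v_t|$, which is independent of $h$. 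The maximum principle then gives $|v^h|\leq W$ with an $h$-independent constant, and letting $h\to 0$ yields $|v_t|\leq C r^{2-\vep}$. No $\delta\to 0$ limit, no bootstrap. Equivalently, once your own Step~1 gives $v_t\in C^0$ up to $r=0$ with $v_t(0,t)=0$, you can run the maximum principle directly on $[0,r_0]\times S^1$; the unboundedness of the zeroth-order coefficient at $r=0$ is irrelevant because any negative interior minimum already yields a contradiction.

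Your treatment of the upper bound $v_r\leq C(\vep)r^{2-\vep}$ is also too sketchy. The sign observation on the inhomogeneous term is correct, but ``the same type of supersolution argument controls any positive excursion'' is not an argument. The paper differentiates (\ref{3.1}) in $r$, obtaining $r^2\Delta v_r=K(r)v_r+\mbox{(nonnegative term bounded by }C_1 r^{1-\vep})$, and then compares with two \emph{different} barriers: $W=Cr^{2-\vep}$ for the upper bound and $W_r=C(2-\vep)r^{1-\vep}$ for the lower bound. The asymmetry in the statement of the lemma---$r^{2-\vep}$ above versus $r^{1-\vep}$ below---reflects exactly this: the inhomogeneous term is one-signed, so only one side admits the sharper comparison.
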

\begin{proof} Since ${\Xi}(r)=\mbox{O}(r)$, $v(r,t)=\mbox{O}(r^{2-\vep})$ uniformly as $r\to0$ for $\vep>0$ arbitrarily small, and $v\leq0$, we see that $\Delta v\in L^p ({\cal M}_{2r_0})$
for any $p>2$ (say) where ${\cal M}_{\delta}=\{(r,t)\in{\cal M}\,|\, 0<r<\delta\}$. Elliptic $L^p$-estimates indicate that $v\in W^{2,p}({\cal M}_{2r_0})$.
Using the embedding $W^{2,p}({\cal M}_{2r_0})\to C^1(\overline{{\cal M}_{2r_0}})$, we see that $|\nabla v|$ is bounded over ${\cal M}_{2r_0}$.

For any $h>0$, consider the function
\be 
v^h(r,t)=\frac{v(r,t+h)-v(r,t)}h.
\ee
Then $\{v^h\}$ is uniformly bounded over ${\cal M}_{r_0}$ and $v^h(r,t)\to0$ as $r\to0$. Of course, in view of (\ref{3.1}), $v^h$ satisfies the equation
\be 
\Delta v^h=\frac2{{\Xi}^2}\e^{u^h} v^h,\quad 0<r<r_0,
\ee
where $u^h(r,t)$ lies between $v(r,t)$ and $v(r,t+h)$. Using the comparison function $W$ for fixed $\vep>0$ defined in (\ref{W}) and applying (\ref{V0}), we get
\bea \label{3.5}
\Delta(v^h+W)&=&\frac2{{\Xi}^2} \e^{u^h} v^h+(2-\vep)(1-\vep) r^{-2} W\nn\\
&\leq& r^{-2}\frac2{\cosh^2\left(\frac{2\xi(r)}S\right)} \e^{u^h} (v^h+W),\quad 0<r<r_0,
\eea
where we have assumed that $r_0>0$ is sufficiently small and applied the uniform limit $v(r,t+h)\to0$ as $r\to0$. Since $\{v^h\}$ is bounded, we may also assume that $C>0$
in (\ref{W}) is large enough so that
\be \label{3.6}
v^h(t,r_0)+W(r_0)\geq0\quad\mbox{for all }t.
\ee
Thus, the boundary condition consisting of $v^h+W=0$ at $r=0$ and (\ref{3.6}), the inequality (\ref{3.5}), and the maximum principle together lead us to
\be 
v^h(r,t)+W(r)\geq0,\quad 0<r<r_0.
\ee
Similarly, we have
\be 
\Delta(v^h-W)\geq r^{-2}\frac2{\cosh^2\left(\frac{2\xi(r)}S\right)} \e^{u^h} (v^h-W),\quad 0<r<r_0,
\ee
and we deduce $v^h-W\leq0$, $0<r<r_0$. Summarizing these results, we arrive at $|v^h(r,t)|\leq W(r)$, $0<r<r_0$. Letting $h\to0$, we obtain $|v_t(r,t)|\leq W$,
$0<r<r_0$, as stated in (\ref{3.2}).

In order to get the decay estimate for $v_r$ as $r\to0$, we note that
\be 
\lim_{r\to0}v_r(r,t)=\lim_{r\to0}\frac{v(r,t)}r=0,
\ee
by virtue of Theorem \ref{theorem}. Differentiating (\ref{3.1}), we find, using $v\leq0$, the inequality
\be \label{3.10}
\frac2{{\Xi}^2}\e^v v_r-\frac4{{\Xi}^3}(\e^v-1)\cosh\left(\frac{2r}S\right)=\Delta v_r\geq\frac2{{\Xi}^2}\e^v v_r,\quad0<r<r_0.
\ee
Consequently, we have
\be 
K(r) v_r\leq r^2\Delta v_r\leq K(r) v_r+C_1 r^{1-\vep},\quad 0<r<r_0,
\ee
where $C_1>0$ is an absolute constant and $K=r^2(2/{\Xi}^2)\e^v$ satisfies
\be 
\lim_{r\to0} K(r)=2.
\ee
Hence, for the function $W$ defined in (\ref{W}), we have
\be \label{3.13}
r^2\Delta (v_r-W)\geq K(r) (v_r-W),\quad 0<r<r_0,
\ee
and 
\bea \label{3.14}
r^2\Delta(v_r+W_r)&\leq&K(r) v_r+C_1 r^{1-\vep}-\vep(1-\vep) W_r\nn\\
&\leq &K(r)v_r+\frac{C_1}C W_r,\quad 0<r<r_0,
\eea
where $0<\vep<1$ and $C$ is as given in (\ref{W}). We may choose $C$ large enough so that $K(r)\geq C_1/C$, $0<r<r_0$. Then (\ref{3.14}) gives us
\be \label{3.15}
r^2\Delta(v_r+W_r)\leq K(r)(v_r+W_r),\quad 0<r<r_0.
\ee
Using the same maximum principle argument in (\ref{3.13}) and (\ref{3.15}) as before, we see that, when $C>0$ in (\ref{W}) is large enough, we have
\be 
-W_r\leq v_r \leq W,\quad 0<r<r_0,
\ee
which establishes the decay estimate for $v_r$ stated in (\ref{3.2}).
\end{proof}

 We now consider the decay estimate for $|\nabla v|$ as $r\to\infty$. Since we do not know whether $v\to0$ as $r\to\infty$, we encounter a somewhat delicate
situation that $v$ may not lie in $L^2({\cal M})$.

Similar to (\ref{3.1}), we know that $v$ satisfies
\be \label{3.16}
\Delta v=\frac2{{\Xi}^2}(\e^v-1),\quad r>\delta,
\ee
where $\delta>0$ is sufficiently large. For convenience, we set ${\cal M}^\delta=\{(r,t)\in{\cal M}\,|\,r>\delta\}$. 

\begin{lemma}\label{lemma3.2} We have $|\nabla v|\in L^2({\cal M}^\delta)$.\end{lemma}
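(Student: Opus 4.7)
The plan is a cutoff–multiplier argument that exploits two ingredients: the solution $v$ is globally bounded on ${\cal M}$ by Theorem \ref{theorem}, and the elementary pointwise sign relation $v(\e^v-1)\geq 0$ holds for every real $v$. Since we do not know whether $v\to 0$ as $r\to\infty$, one cannot use $v$ itself as a global multiplier, and introducing a radial cutoff is essential.

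First I would fix a smooth radial cutoff $\chi_R=\chi_R(r)$ with $\chi_R\equiv 1$ on $[\delta+1,R]$, $\chi_R\equiv 0$ outside $(\delta,R+1)$, and $|\chi_R'|\leq M$ for a constant $M$ independent of $R$. Multiplying (\ref{3.16}) by $v\chi_R^2$ and integrating over ${\cal M}^\delta$, the $\beta$-periodicity of $v$ in $t$ kills the $t$-boundary terms, while the vanishing of $\chi_R$ at $r=\delta$ and $r=R+1$ kills the $r$-boundary terms, so integration by parts yields
\be
\int_{{\cal M}^\delta}\chi_R^2|\nabla v|^2\,\dd r\,\dd t+2\int_{{\cal M}^\delta}v\chi_R\chi_R' v_r\,\dd r\,\dd t=-\int_{{\cal M}^\delta}\frac{2}{{\Xi}^2}v(\e^v-1)\chi_R^2\,\dd r\,\dd t.
\ee
The right-hand side is non-positive by the sign property, and Young's inequality applied to the cross term together with absorption gives
\be
\frac12\int_{{\cal M}^\delta}\chi_R^2|\nabla v|^2\,\dd r\,\dd t\leq 2\int_{{\cal M}^\delta}v^2(\chi_R')^2\,\dd r\,\dd t\leq 4\beta M^2\|v\|_\infty^2,
\ee
a bound uniform in $R$ since $\chi_R'$ is supported in two $r$-strips of unit length.

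Letting $R\to\infty$ by monotone convergence produces $|\nabla v|\in L^2({\cal M}^{\delta+1})$. On the remaining strip $\delta<r<\delta+1$, which has finite area and on which (\ref{3.16}) is uniformly elliptic with bounded coefficients, standard interior $W^{2,p}$-estimates applied to the bounded classical solution $v$ make $|\nabla v|$ bounded and hence square-integrable, and combining the two pieces yields the claim. The main obstacle is conceptual rather than technical: without a pointwise decay for $v$, no global integration by parts is available, and it is precisely the combination of the cutoff $\chi_R$, the uniform $L^\infty$-bound on $v$, and the favorable sign of $v(\e^v-1)$ that permits passage to the limit. Notably, the exponential decay of $1/{\Xi}^2$ is not needed for this particular $L^2$-estimate of $\nabla v$, though it should become essential later when the action density itself is integrated.
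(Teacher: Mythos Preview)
Your proof is correct and follows the same overall scheme as the paper's: multiply the equation by the solution times the square of a radial cutoff, integrate by parts, absorb the cross term, and pass to the limit in the cutoff parameter. The technical choices differ in two respects worth noting. First, the paper extends $v$ to a function $w$ vanishing for $r<\delta/2$ and uses a \emph{scaling} cutoff $\eta_\rho(r)=\eta(r/\rho)$, so that the error term $\int w^2|\nabla\eta_\rho|^2$ is of order $1/\rho$; you use a fixed-width cutoff $\chi_R$ and get a constant (but $R$-independent) error, then patch in the strip $\delta<r<\delta+1$ by interior elliptic estimates. Both work. Second, and more interestingly, the paper keeps the nonlinear term and bounds $\int_{\cal M}\big|\tfrac{2}{{\Xi}^2}(\e^w-1)+h\big|\,|w|$ by using the exponential decay of $1/{\Xi}^2$, whereas you discard the nonlinear contribution outright via the pointwise sign identity $v(\e^v-1)\geq 0$. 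Your route is cleaner here and, as you correctly observe, makes no use of the decay of $1/{\Xi}^2$; the paper's route is more robust in that it would apply to any bounded nonlinearity with an integrable coefficient, regardless of sign structure. One small remark: when you invoke interior $W^{2,p}$-estimates on the strip $\delta<r<\delta+1$, it is the full equation (\ref{2.2}) (valid on all of $\cal M$, with smooth compactly supported $u_0$ and $g$) that you should cite, not (\ref{3.16}), since you need the equation on a two-sided neighbourhood of $r=\delta$ to get bounds up to that edge.
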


\begin{proof} We may extend $v$ outside ${\cal M}^\delta$ smoothly to get a new function, say $w$, so that $w=0$ for $r<\delta/2$ (say). Hence $w$ satisfies
\be \label{3.17}
\Delta w=\frac2{{\Xi}^2}(\e^w-1)+h(r,t),\quad (r,t)\in {\cal M},
\ee
where $h$ is of compact support and smooth. Choose a smooth function $\eta(r)$ in $r\geq0$ such that
\be \label{3.19}
0\leq\eta\leq1;\quad \eta(r)=1,\quad 0\leq r\leq1;\quad \eta(r)=0,\quad r\geq 2.
\ee
Define $\eta_\rho=\eta(r/\rho)$ for $\rho>0$. Multiplying (\ref{3.17}) by $\eta_\rho^2 w$ and integrating, we have
\bea 
\int_{\cal M}\left|\frac2{{\Xi}^2}(\e^w-1)+h\right|\eta^2_\rho |w|\,\dd r\dd t&\geq&\int_{\cal M}\eta^2_\rho|\nabla w|^2\,\dd r\dd t-2\int_{\cal M}\eta_\rho|\nabla w||w||\nabla\eta_\rho|\,\dd r\dd t\nn\\
&\geq&\frac12\int_{\cal M}\eta^2_\rho|\nabla w|^2\,\dd r\dd t-2\int_{\cal M} w^2|\nabla\eta_\rho|^2\,\dd r\dd t\nn\\
&\geq&\frac12\int_{\cal M}\eta^2_\rho|\nabla w|^2\,\dd r\dd t-\frac C\rho,\label{3.20}
\eea
where $C>0$ is a constant depending on $|w|_\infty$ and $\beta$ only. Letting $\rho\to\infty$ in (\ref{3.20}) and recalling (\ref{3.19}), we see that $|\nabla w|\in L^2({\cal M})$ and the lemma follows.
\end{proof}

\begin{lemma}\label{lemma3.3} There is constant $C_\delta>0$ such that
\be 
\sup_{{\cal M}^\delta}\{|\nabla v|(r,t)\}\leq C_\delta,
\ee
and $|\nabla v|(r,t)\to 0$ as $r\to\infty$.
\end{lemma}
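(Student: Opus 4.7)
The plan is to combine the $L^2$ integrability of $|\nabla v|$ on ${\cal M}^\delta$ obtained in Lemma~\ref{lemma3.2} with interior elliptic estimates applied to equation (\ref{3.16}) in order to obtain first the uniform pointwise bound and then the pointwise decay at infinity.

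For the uniform bound, I would first note that since $v$ is bounded (Theorem~\ref{theorem}) and ${\Xi}(r)\geq{\Xi}(\delta)>0$ for $r\geq\delta$, the right-hand side $f(r,t):=\frac{2}{{\Xi}^2}(\e^v-1)$ is uniformly bounded on ${\cal M}^\delta$ (in fact exponentially decaying as $r\to\infty$). Exploiting $\beta$-periodicity in $t$, I would regard ${\cal M}^\delta$ as a half-cylinder $[\delta,\infty)\times S^1_\beta$, and apply standard interior $W^{2,p}$ estimates on a unit Euclidean ball $B_1(r_0,t_0)\subset{\cal M}^\delta$ around each point with $r_0\geq\delta+1$. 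This yields $\|v\|_{W^{2,p}(B_{1/2}(r_0,t_0))}\leq C(\|v\|_{L^\infty(B_1)}+\|f\|_{L^\infty(B_1)})$ for any $p>2$, with $C$ independent of $(r_0,t_0)$, and Sobolev embedding then furnishes a uniform $C^{1,\alpha}$ bound on $v$. The transitional layer $\delta\leq r_0<\delta+1$ is handled in the same way, since (\ref{3.16}) holds on an open neighbourhood of $\{r\geq\delta\}\times S^1_\beta$.

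For the decay, I would argue by contradiction: suppose there exist $\vep_0>0$ and a sequence $(r_n,t_n)$ with $r_n\to\infty$ and $|\nabla v|(r_n,t_n)\geq\vep_0$. The uniform $C^{1,\alpha}$ bound from the previous step supplies an $n$-independent modulus of continuity for $\nabla v$, so there is a radius $\rho_0>0$ depending only on $\vep_0$ and $C_\delta$ for which $|\nabla v|(r,t)\geq\vep_0/2$ on $B_{\rho_0}(r_n,t_n)$ for every $n$. Passing to a subsequence with $r_{n+1}-r_n\geq 2\rho_0$ makes these balls pairwise disjoint, and each then contributes at least $\pi\rho_0^2(\vep_0/2)^2$ to $\int_{{\cal M}^\delta}|\nabla v|^2$, contradicting Lemma~\ref{lemma3.2}.

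The main obstacle I anticipate is ensuring that the elliptic estimates remain uniform as $r_0\to\infty$; this is benign here because the coefficient $\frac{2}{{\Xi}^2}$ in front of $\e^v-1$ stays bounded (indeed, is exponentially small at infinity) rather than degenerating, and the underlying $(r,t)$-geometry of unit balls is flat Euclidean. A secondary point is that this approach yields only the qualitative statement $|\nabla v|\to 0$; extracting a quantitative rate would require inserting the exponential decay of $f$ into rescaled elliptic estimates, but this refinement is not needed for the conclusion of the lemma.
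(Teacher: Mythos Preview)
Your argument is correct but takes a different route from the paper. The paper differentiates equation (\ref{3.16}) in $t$ and in $r$ to obtain linear elliptic equations for $v_t$ and $v_r$ separately; since $v_t,v_r\in L^2({\cal M}^\delta)$ by Lemma~\ref{lemma3.2} and the right-hand sides of these differentiated equations also lie in $L^2({\cal M}^\delta)$ (the coefficient $2\,{\Xi}^{-2}\e^v$ is bounded and the extra inhomogeneous term arising from $\pa_r{\Xi}^{-2}$ decays exponentially), global elliptic regularity gives $v_t,v_r\in W^{2,2}({\cal M}^\delta)$, and the two-dimensional Sobolev embedding then delivers both the uniform bound and the decay in one stroke. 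Your approach instead applies local $W^{2,p}$ estimates directly to $v$ on translated unit balls to obtain a uniform $C^{1,\alpha}$ bound on $\nabla v$, and then feeds the $L^2$ finiteness of $|\nabla v|$ from Lemma~\ref{lemma3.2} into a separate contradiction argument for the decay. The paper's method is a little more economical because decay is automatic once one has global $W^{2,2}$ membership; yours avoids differentiating the equation and yields a uniform H\"older modulus for $\nabla v$, which would be handy if a quantitative rate were later required.
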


\begin{proof} Differentiating (\ref{3.16}), we have
\be \label{3.22}
\Delta v_t=\frac2{{\Xi}^2}\e^v v_t,\quad r>\delta.
\ee
Using Lemma \ref{lemma3.2} and elliptic theory, we see that $v_t\in W^{2,2}({\cal M}^\delta)$. Hence $v_t$ is bounded and $v_t\to0$ as $r\to \infty$.

Similarly, differentiating (\ref{3.16}) with respect to $r$, we have
\be \label{3.23}
\Delta v_r=\frac2{{\Xi}^2}\e^v v_r-\frac4{{\Xi}^3}(\e^v-1)\cosh\left(\frac{2r}S\right),\quad r>\delta,
\ee
whose right-hand side lies in $L^2({\cal M}^\delta)$. Consequently, $v_r$ is bounded and $v_r\to0$ as $r\to\infty$, which establishes the lemma.
\end{proof}

The gradient decay estimates for the solution near the boundary of the domain $\cal M$ allows us to compute the action in terms of the topological invariant
explicitly, which will be carried out in the next section.

\section{Calculation of action}
\setcounter{equation}{0}

Following \cite{Harland}, it will be convenient to express the dimensionally reduced action (\ref{action-phi-a}) of the gauge field $A$ in terms of the $(r,t)$-coordinates explicitly as
\be \label{S}
S(A)=\frac\pi2\int_{\cal M}\left({\Xi}^2 F^2_{tr}+\left[\frac{1-|\phi|^2}{\Xi}\right]^2+2|D_t\phi|^2+2|D_r\phi|^2\right)\,\dd r\dd t.
\ee
It can be checked that the useful identities
\bea
|D_t\phi|^2+|D_r\phi|^2&=&\ii(D_r\phi\overline{D_t\phi}-\overline{D_r\phi}D_t\phi)+|D_r\phi+\ii D_t\phi|^2,  \label{id1}\\
\ii(D_t\phi\overline{D_r\phi}-\overline{D_t\phi}D_r\phi)&=&\ii(\pa_t[\phi\overline{D_r\phi}]-\pa_r[\phi\overline{D_t\phi}])-F_{tr}|\phi|^2,\label{id2}
\eea
hold. Inserting (\ref{id1}) and (\ref{id2}) into (\ref{S}) and applying (\ref{1}) and (\ref{2}), we have
\bea \label{SA}
S(A)&=&\frac\pi2\int_{\cal M}\bigg(\bigg[{\Xi}F_{tr}-\frac1{\Xi}(1-|\phi|^2)\bigg]^2+2F_{tr}(1-|\phi|^2)\nn\\
&&+2|D_r\phi+\ii D_t\phi|^2-2\ii (D_t\phi\overline{D_r\phi}-\overline{D_t\phi}D_r\phi)\bigg)\,\dd r\dd t\nn\\
&=&\pi\int_{\cal M} F_{tr}\,\dd x-\ii \pi \int_{\cal M} (\pa_t[\phi\overline{D_r\phi}]-\pa_r[\phi\overline{D_t\phi}])\,\dd r\dd t.
\eea

By virtue of (\ref{DD}), Lemma \ref{lemma3.1}, and Lemma \ref{lemma3.3}, we see that the last integral on the right-hand side of (\ref{SA}) vanishes. Therefore we obtain the quantized
minimum action
\be 
S(A)=2\pi^2 N,
\ee
as a consequence of the flux formula (\ref{flux}).

\medskip

In summary, we have seen that our main existence theorem for hyperbolic calorons of arbitrary scalar curvature, time period, and topological charge stated in Section 2 is established 
 in Section 4 through 
a construction of the solution of the multivortex equation derived by Harland \cite{Harland}, a computation of the associated topological charge in Section 6, and a calculation of the dimensionally reduced action 
in Section 7, which is based on the gradient boundary estimates of the solution obtained in Section 6.
\medskip 

{\bf Data accessibility.} This work does not have any experimental data.

\medskip 

{\bf Competing interests.} We do not have competing interests.

\medskip 

{\bf Authors' contributions.} LS and RS offered ideas and insights in the mathematical formulation and conception of the problem. YY developed analytic
methods to tackle the problem and wrote the paper. All authors gave final approval for publication.
\medskip 

{\bf Funding.} 
The research of YY was partially supported
by National Natural Science Foundation of China under Grant No. 11471100.

\medskip
\medskip

\small{

}

\end{document}